%-*- tex-command: "pdflatex" -*-
\pdfoutput=1
% Getting an underfull hbox warning. It seems like the culprit is the footer from the template definition. When empty it needs some /hfill ? Could not fix that myself yet. Other submissions seem to have this issue as well.

\documentclass[
%%% one of
%submission
final
%%% if all authors have the same affiliation
, nomarks
]{dmtcs-episciences}
\usepackage[utf8]{inputenc}

\usepackage{hyperref}
\usepackage{subfigure}
\usepackage{amsmath}
\usepackage{amsthm} % 
\usepackage{amssymb}
\usepackage[T1]{fontenc} % for \textsc inside \section
\usepackage{enumerate}
\usepackage{comment}
\usepackage{siunitx}

\usepackage{float}
\usepackage{dsfont}
\usepackage{tikz}
\usetikzlibrary{positioning}
\usetikzlibrary{decorations}
\usetikzlibrary{calc}
\usetikzlibrary{decorations.pathreplacing}
\usetikzlibrary{shapes.misc}
\usetikzlibrary{matrix}
\usetikzlibrary{math}
\usepackage{bookmark}

\usepackage[round]{natbib}
\usepackage{xurl}
\hypersetup{
    breaklinks=true,
  pdftitle = {Representing polynomial of st-connectivity},
  pdfauthor = {Janis Iraids, Juris Smotrovs},
  pdfstartview=FitH,
  unicode=true
}

\theoremstyle{plain}

\newcommand{\ket}[1]{\left| #1 \right\rangle}
\newcommand{\bra}[1]{\left\langle #1 \right|}

\newcommand{\norm}[1]{\left\| #1 \right\|}
\newcommand{\dc}{\textsc{AQ-Connectivity}}
\newcommand{\stc}{\textsc{ST-Connectivity}}
\newcommand{\faces}[1]{\mathcal{F}\left( #1 \right)}
\newcommand{\subsetdir}{\overset{\bullet}{\subset}}

\newtheorem{theorem}{Theorem}
\newtheorem{definition}{Definition}
\newtheorem{lemma}{Lemma}
\newtheorem{proposition}{Proposition}
\newtheorem{corollary}{Corollary}

\AtBeginDocument{% for small caps in math mode
  \DeclareFontShape{T1}{cmr}{m}{scit}{<->ssub*lmr/m/scsl}{}%
%  \DeclareFontShape{T1}{ptm}{m}{scit}{<->ssub*ptm/m/sc}{}%
}

\author{J\={a}nis Iraids \and Juris Smotrovs}
\title{Representing polynomial of \stc}
\affiliation{Faculty of Computing, University of Latvia, Riga, Latvia}
\keywords{monotone Boolean function, representing polynomial, connectivity, atomistic lattice}

\begin{document}
\publicationdata{vol. 25:2 }{2023}{22}{10.46298/dmtcs.9934}{2022-08-18; 2022-08-18; 2023-07-18}{2023-10-19}\maketitle

\begin{abstract}
We show that the coefficients of the representing polynomial of any monotone Boolean function are the values of the M\"obius function of an atomistic lattice related to this function. Using this we determine the representing polynomial of any Boolean function corresponding to a \stc\ problem in acyclic quivers (directed acyclic multigraphs). Only monomials corresponding to unions of paths have non-zero coefficients which are $(-1)^D$ where $D$ is an easily computable function of the quiver corresponding to the monomial (it is the number of plane regions in the case of planar graphs).
We determine that the number of monomials with non-zero coefficients for the two-dimensional $n \times n$ grid connectivity problem is $2^{\Omega(n^2)}$.
\end{abstract}

\section{Introduction}

In this paper we study the representing polynomials of Boolean functions. % corresponding to certain graph connectivity problems. 
Representations of Boolean functions by polynomials of various forms have a number of applications to computer
science, from circuit lower bounds \citep{Hastad86} 
to machine learning \citep{LinialMN93,BshoutyT96} 
and quantum query algorithms \citep{BealsBCMW01,BuhrmanW02}.
A detailed overview of mathematical properties of such polynomials and their applications can be found in the textbook by \citet{ODonnell2014}.

In the current paper we focus on exact representations of monotone Boolean functions, in particular, for Boolean functions that correspond to \stc.
The motivation of our work comes from quantum computing 
where subgraph connectivity problems emerge in the context of producing quantum speedups for various problems, 
such as Travelling Salesman Problem \citep{ABIKPV2019} and Edit Distance \citep{ABIKKPSSV2020}. 
It is well known that quantum algorithms can be described by polynomials \citep{BealsBCMW01} and this connection
has been used to prove a number of lower bounds on quantum algorithms. In particular, the degree of the representing polynomial is used to give lower bounds in the exact quantum query model (where the algorithm has to output the correct answer with certainty) and the minimum degree of an approximating polynomial is used for the much more natural bounded error quantum algorithms. In many cases, the polynomials lower bound is asymptotically tight and characterizes quantum query complexity up to a constant factor. Because of that, we think that it may be interesting to understand polynomials that represent the corresponding subgraph connectivity problems. Even though we focus on the representing polynomials, they can be used to obtain optimal approximating polynomials in some cases \citep{Beniamini2020}.

We now give a more technical overview of problems that we study and our results.
Every Boolean function $f:\{0,1\}^n\to\{0,1\}$ can be expressed as a real multilinear polynomial in a unique way. We will refer to it as the representing polynomial of the Boolean function. Such a polynomial is often used to estimate various complexity measures of a Boolean function, or to construct an algorithm related to it. 

The representing polynomials are often studied in the Fourier basis in which the bits $0$ and $1$ are replaced by values $1$ and $-1$, respectively. In particular, the cited textbook mostly deals with these polynomials in the Fourier basis. However, we will concentrate on these polynomials in the standard, $\{0,1\}$ basis, using approach similar to the one employed by Beniamini and Nisan in the recent papers \citep{BeniaminiNissan2020,Beniamini2020} involving lattices, their M\"obius functions and convex polytopes.

In this paper we deal with Boolean functions corresponding to the problem of \stc\ in acyclic quivers (\dc). In such a problem, the input bits correspond to the arcs of a given acyclic quiver, denoting presence ($1$) or absence ($0$) of an arc, and the Boolean function is equal to $1$ iff there is a path (consisting of present arcs) from a starting vertex $s$ (source) to a final vertex $t$ (sink). % References?? Can we cite our papers?
Instances of such problems are found in the Travelling Salesman Problem, where the graph is the Boolean hypercube (see, e.g., \citet{ABIKPV2019}), and in the Edit Distance problem where the graph is a two-dimensional grid and the task is to determine the shortest distance (see, e.g., \citet{ABIKKPSSV2020}).

Boolean functions corresponding to $\dc$ are monotone. In Section~\ref{sec:repmon} we show that the coefficients of the representing polynomial of any monotone Boolean function are the values of the M\"obius function (with minus sign) of the poset of unions of its prime implicants (Theorem~\ref{thm:poly}). This generalizes the corresponding
result by \citet[Proposition~3.4]{BeniaminiNissan2020}.
We also characterize the posets obtainable as unions of prime implicants: they are exactly the finite atomistic lattices (Proposition~\ref{prop:atomnec} and Theorem~\ref{thm:atomsuff}).

Then, relating an acyclic quiver $G$ to a flow polytope and using results from \citet{Hil03}, we compute the corresponding M\"obius function obtaining formula for the representing polynomial (Section~\ref{sec:moebdag}):
  \begin{equation*}p_G(x)=\sum_{H \in U(P_G)\setminus \{\varnothing\}}{(-1)^{D(H)}\prod_{i\in H}{x_i}}\end{equation*}
where the summation variable $H$ ranges over non-empty unions of paths from the source to the sink, and $D(H)$ is an easily computable function of $H$ (in the case of a planar graph it is essentially the number of regions in which $H$ divides the plane).

Since the number of monomials with non-zero coefficients is related to the communication complexity of the Boolean function (see, e.g., Section~4.4.1 in \citet{BeniaminiNissan2020}) and can be used to obtain a good approximating polynomial \citep[Lemma~31]{Beniamini2020}, we estimate it in the case when $G$ is a two-dimensional grid (Section~\ref{sec:sizegrid}). In Section~\ref{sec:posmoebpoly} we provide some preliminaries, and Section~\ref{sec:conc} contains the conclusion.

\section{Posets, their M\"obius functions, and convex polytopes}
\label{sec:posmoebpoly}

Let $O=\left\langle S, \leqslant_O \right\rangle$ be a poset over $S$ with order relation $\leqslant_O$. An antichain is any subset of $S$ consisting of mutually incomparable elements (under the order relation $\leqslant_O$). A join of $s\in S$ and $t\in S$ is an element $u\in S$ such that $\forall v:(s\leqslant_O v \text{ and } t\leqslant_O v)\rightarrow (u \leqslant_O v)$ if it exists. In a slight abuse of the notation we will denote the join of $s$ and $t$ by $s\cup t$. A poset for which every pair of elements have a join is called a join-semilattice. In a poset with least element $\varnothing$ an element $a$ is called an atom if $\varnothing <_O a$ and there is no $v$ such that $\varnothing <_O v <_O a$. A poset with least element $\varnothing$ is called atomistic if every element is a join of some set of atoms (join can be generalized to multiple elements by associativity). Consider the set inclusion poset $O=\left\langle 2^S,\subset \right\rangle$, and let $A$ be some antichain of this poset. Then let $U(A)$ be the induced poset of unions: $U(A)=\left\langle \{\bigcup_{b \in B}{b}|B\subseteq A\}, \subset \right\rangle$.

For every poset $O$ there exists a unique function called the M\"obius function $\mu:S\times S \rightarrow \reals$, with the following two properties:
\begin{enumerate}[1)]
    \item For all $s \in S$: $\mu(s,s)=1$, and
    \item For all $u,v \in S$ such that $u <_O v$: \[\sum_{s:u \leqslant_O s \leqslant_O v}{\mu(u,s)}=0.\]
\end{enumerate}

Let $H=\{x\in \reals^n| ax=b\}$ denote a hyperplane and let $H^+=\{x\in \reals^n| ax\geq b\}$ denote one of the half-spaces whose boundary is $H$. An intersection of half-spaces $B=H_1^+ \cap H_2^+ \cap \cdots \cap H_m^+$ that is bounded is called a convex polytope. If $H^+\cap B=B$, then the intersection $H\cap B$ is called a face of $B$; and each face itself is a polytope. Using the duality of linear programming one can show that
\begin{lemma}[{\citet[Chapter~8, Eqn.~(11)]{Schrijver1986}}]\label{lem:faces} A non-empty $F\subseteq \reals^n$ is a face of $H_1^+ \cap H_2^+ \cap \cdots \cap H_m^+$ if and only if $F=\bigcap_{i\in M}{H_i}\cap \bigcap_{i\notin M}{H_i^+}$ for some $M\subseteq [m]$.
\end{lemma}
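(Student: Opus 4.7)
The lemma characterizes the faces of a polyhedron as exactly the subsets obtained by converting some subset of the defining inequalities into equalities. I would split the proof into two directions; the ``if'' direction is a direct construction, while the ``only if'' direction is the place where linear-programming duality (as hinted in the excerpt) is doing the work.

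For the ``if'' direction, suppose $F=\bigcap_{i\in M}H_i\cap\bigcap_{i\notin M}H_i^+$ is non-empty, where $H_i=\{x:a_ix=b_i\}$. I would build a supporting hyperplane explicitly by aggregating the active constraints: set $a=\sum_{i\in M}a_i$, $b=\sum_{i\in M}b_i$, and $H=\{x:ax=b\}$. For any $x\in B$ each $a_ix-b_i\geq 0$, so $ax\geq b$, giving $B\subseteq H^+$. If in addition $ax=b$, the nonnegative summands $a_ix-b_i$ ($i\in M$) must each vanish, so $x\in\bigcap_{i\in M}H_i$, which together with $x\in B$ shows $H\cap B=F$. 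Hence $F$ is a face.

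For the ``only if'' direction, let $F=H\cap B$ with $H=\{x:ax=b\}$ and $B\subseteq H^+$. Then $b=\min\{ax : a_ix\geq b_i,\ i\in[m]\}$, and $F$ is its set of optimizers; non-emptiness of $F$ means the primal attains its optimum. Strong LP duality then produces a dual optimum $y\in\mathbb{R}^m_{\geq 0}$ with $\sum_i y_i a_i=a$ and $\sum_i y_i b_i=b$. Set $M=\{i:y_i>0\}$. Complementary slackness forces every $x\in F$ to satisfy $a_ix=b_i$ for $i\in M$, so $F\subseteq\bigcap_{i\in M}H_i\cap\bigcap_{i\notin M}H_i^+$. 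Conversely, for $x$ in the right-hand side we have $x\in B$ and
\[ ax=\sum_{i}y_i(a_ix)=\sum_{i\in M}y_ib_i=b, \]
so $x\in H\cap B=F$, giving equality.

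The only real obstacle is the clean invocation of duality in the second direction: one has to notice that non-emptiness of $F$ is precisely what guarantees primal attainment (so that strong duality and complementary slackness apply), and then verify that the set $M$ of positive dual variables yields both inclusions. Everything else is routine manipulation of the defining inequalities, and in particular the ``if'' direction needs no duality at all.
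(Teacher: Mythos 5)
Your proof is correct: the aggregated constraint $a=\sum_{i\in M}a_i$, $b=\sum_{i\in M}b_i$ cleanly gives the ``if'' direction, and strong LP duality together with complementary slackness gives the ``only if'' direction exactly as needed. The paper does not spell out a proof (it cites Schrijver and remarks that LP duality is the tool), and your argument is precisely the LP-duality proof it alludes to.
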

In other words, a face corresponds to a system where some inequalities are replaced by equalities. This representation may be non-unique. Denote by $\faces{B}$ the set of faces of polytope $B$. Then $\left\langle \faces{B}, \subset \right\rangle$ is a poset that is a lattice called the face lattice of polytope $B$. For a polytope $B$ let its dimension be the largest $n$ such that there exist $v_1, v_2, \ldots, v_{n+1}\in B$ such that $v_1-v_{n+1},v_2-v_{n+1}, \ldots, v_n-v_{n+1}$ are linearly independent. Let us denote the dimension of a polytope $B$ by $\dim{B}$. Let $\dim{\varnothing}=-1$. Then
\begin{theorem}[Euler's relation, \cite{Gruenbaum2003,Broendsted1983}]
    For any two faces $F_1,F_2\in \faces{B}$, such that $F_1\subset F_2$:
  \begin{equation}\sum_{F:F_1\subseteq F\subseteq F_2}{(-1)^{\dim{F}}}=0.\end{equation}
\end{theorem}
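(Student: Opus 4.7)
My strategy is to first reduce the general statement to the special case where $F_1 = \emptyset$ and $F_2 = B$, and then prove that reduced statement by induction on the dimension of the polytope.

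For the reduction, I would use the classical structural fact that the closed interval $[F_1, F_2]$ inside the face lattice $\langle \faces{B}, \subset \rangle$ is isomorphic, as a poset, to the face lattice of some convex polytope $Q$ of dimension $\dim F_2 - \dim F_1 - 1$, via a correspondence that shifts dimensions by the constant $\dim F_1 + 1$. Concretely, $Q$ can be realized as a vertex figure of $F_1$ taken inside $F_2$: pick an affine hyperplane $H$ that meets the relative interior of $F_2$ and separates a single relative-interior point of $F_1$ from all other vertices of $F_2$, and set $Q = F_2 \cap H$, with the correspondence given by $F \mapsto F \cap H$. Then
\[\sum_{F:\ F_1 \subseteq F \subseteq F_2}(-1)^{\dim F} = (-1)^{\dim F_1 + 1}\sum_{G \in \faces{Q}}(-1)^{\dim G},\]
so the theorem reduces to proving $\sum_{G \in \faces{Q}}(-1)^{\dim G} = 0$ for every nonempty convex polytope $Q$ (where the sum includes the empty face and $Q$ itself).

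For this reduced Euler relation I would induct on $d = \dim Q$. The base case $d = 0$ is immediate: $(-1)^{-1} + (-1)^0 = 0$. For the inductive step, take a generic affine hyperplane $H'$ passing through the interior of $Q$ but missing every vertex, and classify each nonempty proper face $F$ of $Q$ as lying strictly above $H'$, strictly below $H'$, or crossing $H'$. Crossing faces form a dimension-decreasing bijection $F \mapsto F \cap H'$ with the nonempty faces of $Q \cap H'$, a polytope of dimension $d - 1$ to which the inductive hypothesis applies; the other two classes can be handled by applying the inductive hypothesis to the closed half-polytopes $Q \cap (H')^+$ and $Q \cap (H')^-$. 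Adding in the contributions of $\emptyset$ and $Q$ themselves, the alternating sums telescope to zero.

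The principal obstacle is the reduction lemma: one has to verify carefully that the vertex-figure construction really yields a lattice isomorphism $[F_1, F_2] \cong \faces{Q}$ with the claimed uniform dimension shift. This rests on a general-position argument ensuring that $H$ cuts precisely those faces of $F_2$ that contain $F_1$, together with a clean dimension count identifying $\dim(F \cap H)$ with $\dim F - \dim F_1 - 1$ for such faces. The inductive step is standard but requires similarly meticulous bookkeeping to avoid double-counting across the three classes of faces.
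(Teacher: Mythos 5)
The paper does not actually prove this theorem: it imports Euler's relation as a classical black-box result, citing Gr\"unbaum \cite{Gruenbaum2003} and Br\o ndsted \cite{Broendsted1983}, and only uses it to derive the M\"obius function of the face lattice in Corollary~\ref{cor:euler}. So your proposal is attempting something the paper deliberately delegates to the references, and it has to be judged on its own.

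As a sketch it is built on the right two ingredients (intervals in face lattices are face lattices; a sweeping/cutting argument for the full-lattice case), but both steps have real gaps. For the reduction, it is indeed a classical theorem that the interval $[F_1,F_2]$ in $\faces{B}$ is isomorphic to the face lattice of a polytope of dimension $\dim F_2-\dim F_1-1$ (this is the quotient polytope $F_2/F_1$, obtained by an \emph{iterated} vertex figure). However, the single-hyperplane construction you describe only works when $F_1$ is a vertex or $\varnothing$. If $\dim F_1\ge 1$, no affine hyperplane can strictly separate a relative-interior point $p$ of $F_1$ from all vertices of $F_2$: taking two vertices $v_0,v_1$ of $F_1$ with $p$ on the segment between them, any affine functional that is small at $p$ cannot be large at both $v_0$ and $v_1$. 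So $F_2\cap H$ will pick up faces that meet $F_1$ without containing it, and its face lattice is not $[F_1,F_2]$. You flag the reduction as the principal obstacle, correctly, but the construction you give does not surmount it.

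The inductive step is also not sound as stated. You induct on $d=\dim Q$. The crossing faces do biject (with dimension drop one) with the nonempty faces of the $(d-1)$-polytope $Q\cap H'$, so the inductive hypothesis applies to that piece. But you then propose to handle the ``strictly above'' and ``strictly below'' faces ``by applying the inductive hypothesis to the closed half-polytopes $Q\cap (H')^+$ and $Q\cap (H')^-$,'' and these are $d$-dimensional. The inductive hypothesis does not cover them; invoking Euler's relation there is exactly the statement you are trying to prove for a polytope of the same dimension. To make the half-polytope route work one would need a second induction (e.g.\ on the number of vertices, placing $H'$ so that a single vertex lies strictly above it), or one should instead use Gr\"unbaum's genuine sweeping argument, which slides $H'$ continuously across $Q$ and counts for each proper face the interval of positions in which $H'$ meets it. A single generic cut, with the bookkeeping you describe, does not close the argument.
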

Let $\mu_B$ denote the M\"obius function of the face lattice of the polytope $B$.
\begin{corollary}\label{cor:euler}
  \begin{equation}\mu_B(F_1, F_2)=(-1)^{\dim{F_2}-\dim{F_1}}.\end{equation}
\end{corollary}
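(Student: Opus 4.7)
The plan is to exploit the uniqueness of the Möbius function: I will define a candidate function $\nu$ on the face lattice by $\nu(F_1, F_2) := (-1)^{\dim F_2 - \dim F_1}$ for $F_1 \subseteq F_2$, verify that $\nu$ satisfies the two defining properties of the Möbius function, and then conclude $\mu_B = \nu$.

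First I would check the diagonal condition: $\nu(F, F) = (-1)^{0} = 1$, so property 1 holds trivially. Next, for any strict containment $F_1 \subsetneq F_2$, I would compute
\[
\sum_{F:\, F_1 \subseteq F \subseteq F_2} \nu(F_1, F) \;=\; (-1)^{-\dim F_1} \sum_{F:\, F_1 \subseteq F \subseteq F_2} (-1)^{\dim F}.
\]
The interval $[F_1, F_2]$ in the face lattice of $B$ is itself the face lattice of a polytope (namely the quotient face arising from restricting to $F_2$ and modding out by $F_1$), so Euler's relation applies and the inner sum vanishes. Hence property 2 is satisfied.

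Having verified both defining properties, the uniqueness statement for Möbius functions of finite posets forces $\mu_B(F_1, F_2) = \nu(F_1, F_2) = (-1)^{\dim F_2 - \dim F_1}$, giving the corollary.

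I do not anticipate any genuine obstacle here; the statement is essentially a restatement of Euler's relation in the language of Möbius functions, and the only mild subtlety is justifying that Euler's relation applies to the interval $[F_1, F_2]$ rather than just to the whole face lattice of $B$ — this is handled by the standard fact that closed intervals in face lattices are themselves face lattices of polytopes, which is exactly what the cited theorem statement allows (it is phrased for arbitrary nested pairs $F_1 \subset F_2$).
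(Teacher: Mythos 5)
Your proof is correct and takes essentially the same approach as the paper: define the candidate $\nu(F_1,F_2)=(-1)^{\dim F_2-\dim F_1}$, verify the two defining properties of the Möbius function, and invoke uniqueness, with the key vanishing sum supplied directly by Euler's relation as stated in the paper (which is already phrased for arbitrary intervals $F_1\subset F_2$). The detour through quotient polytopes you mention is therefore unnecessary, as you yourself note at the end.
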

\begin{proof}
  Since the M\"obius function is unique, it is sufficient to verify that $\mu_B$ as defined here satisfies the properties 1) and 2). Clearly, $\mu_B(F, F)=1$. For $F_1 \subset F_2$:
  \begin{equation}\sum_{F:F_1\subseteq F \subseteq F_2}{\mu_B(F_1, F)}=
  \sum_{F:F_1\subseteq F \subseteq F_2}{(-1)^{\dim{F}-\dim{F_1}}}=
  (-1)^{\dim{F_1}} \sum_{F:F_1\subseteq F \subseteq F_2}{(-1)^{\dim{F}}}=0.\end{equation}
\end{proof}

\section{Representing monotone functions}
\label{sec:repmon}

Let $f:\{0,1\}^n\rightarrow \{0,1\}$ be a Boolean function on $n$ variables. For two $n$-bit strings $x$ and $y$ we will denote by $x_i$ the $i$-th bit of the string, and say that $x\leq y$ if for all $i$: $x_i \leq y_i$. We say that a Boolean function is monotone if for all $x,y\in \{0,1\}^n$: $x\leq y$ implies $f(x)\leq f(y)$. We say that a set $I\subseteq [n]$ is a prime implicant of $f$ if the property $(\forall i\in I: x_i=1) \implies f(x)=1$ holds for $I$, but fails for every subset $I'\subset I$. Every monotone Boolean function has a unique representation in terms of the set of its prime implicants denoted by $P_f$. Since $P_f$ is an antichain in the inclusion poset of the subsets of $[n]$, we can consider its subposet $U(P_f)$.

\begin{proposition}\label{prop:atomnec}
The poset $\left\langle U(P_f), \subset \right\rangle$ is an atomistic lattice.
\end{proposition}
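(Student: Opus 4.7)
The plan is to verify the two ingredients of being an atomistic lattice: first that $U(P_f)$ is a lattice, and then that every element is a join of atoms, which will be the prime implicants themselves.

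First I would observe that $U(P_f)$ contains $\emptyset$ (the empty union, with $B=\emptyset$) as its bottom element, and that joins exist since, for any $A_1=\bigcup_{b\in B_1}b$ and $A_2=\bigcup_{b\in B_2}b$ in $U(P_f)$, their union $A_1\cup A_2=\bigcup_{b\in B_1\cup B_2}b$ is again in $U(P_f)$ and is clearly their least upper bound in the inclusion order. Since $U(P_f)$ is finite and has a bottom element together with all pairwise joins, meets exist automatically (the meet of $X$ and $Y$ is the join of all elements of $U(P_f)$ below both $X$ and $Y$). Hence $\langle U(P_f),\subset\rangle$ is a lattice.

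Next I would identify the atoms. I claim they are exactly the prime implicants in $P_f$. The key step — and the only place where one uses a non-trivial property of $P_f$ — is to show that no element of $U(P_f)$ lies strictly between $\emptyset$ and a prime implicant $b\in P_f$. Suppose $\emptyset\subsetneq A\subseteq b$ for some $A=\bigcup_{c\in C}c\in U(P_f)$ with $C\subseteq P_f$ non-empty. Every $c\in C$ satisfies $c\subseteq A\subseteq b$, but since $P_f$ is an antichain, the only prime implicant contained in $b$ is $b$ itself; hence $C=\{b\}$ and $A=b$. Thus each $b\in P_f$ is an atom, and conversely every atom, being a non-empty union of prime implicants, must by minimality coincide with one of them.

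Finally, atomisticity follows directly from the definition of $U(P_f)$: every element is of the form $\bigcup_{b\in B}b$ with $B\subseteq P_f$, which is precisely the join of the atoms in $B$. There is no serious obstacle in this argument; the only subtle point is the antichain argument identifying the atoms, and the rest is a routine unpacking of definitions.
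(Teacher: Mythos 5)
Your proof is correct and follows essentially the same route as the paper's: verify the lattice property by noting $\emptyset$ is the bottom and unions provide joins (with meets then automatic in a finite bounded join-semilattice), then use the antichain property of $P_f$ to identify the atoms and conclude atomisticity since every element is by construction a union of prime implicants. The only difference is that you spell out the antichain argument identifying atoms in a bit more detail, while the paper states it more tersely and cites Stanley for the semilattice-implies-lattice fact.
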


\begin{proof}
Since $\left\langle U(P_f), \subset \right\rangle$ is a subposet of the lattice
$\left\langle 2^{[n]}, \subset \right\rangle$ containing all the original joins 
(unions) of the elements of $U(P_f)$, it is an upper semilattice. Moreover, it is a
lattice, since $U(P_f)$ contains $\varnothing$ as the least element, and any finite
bounded upper semilattice is a lattice (see e.g. Proposition~3.3.1 in \citet{Stanley2012}).
Since the elements of $P_f$ are incomparable, and $U(P_f)$ contains beside them only
their unions (including the empty one), $P_f$ is the set of atoms of $U(P_f)$,
and every element of $U(P_f)$ is expressible as their join (union), 
i.e.\ the lattice $\left\langle U(P_f), \subset \right\rangle$ is atomistic.
\end{proof}

\begin{theorem}\label{thm:atomsuff}
Every finite atomistic lattice is isomorphic to
$\left\langle U(P_f), \subset \right\rangle$ for some monotone Boolean function $f$.
\end{theorem}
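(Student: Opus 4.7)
The plan is to realise any finite atomistic lattice $L$ as $U(P_f)$ by sup-embedding $L$ into a power set and reading off a Boolean function from the images of the atoms of $L$. Let $A$ denote the set of atoms of $L$, take $n=|L|$ with input coordinates of $f$ indexed by the elements of $L$, and for each atom $a\in A$ define
\[S_a = \{y\in L : a\not\leq y\}.\]
Set $f(x)=1$ iff the support of $x$ contains some $S_a$, that is, $f(x)=\bigvee_{a\in A}\bigwedge_{y\in S_a}x_y$. This $f$ is monotone, and for any two distinct atoms $a\ne b$ the element $b$ lies in $S_a\setminus S_b$ (since $a\not\leq b$ but $b\leq b$), so $\{S_a\}_{a\in A}$ is an antichain under $\subseteq$, hence coincides with $P_f$.

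The key step is to verify that taking unions of the $S_a$'s mirrors taking joins in $L$. Introduce the map $\phi:L\to 2^L$ defined by $\phi(y)=\{z\in L:y\not\leq z\}$. For any $B\subseteq A$ one has
\[\bigcup_{a\in B}S_a = \{z\in L : \exists\, a\in B,\ a\not\leq z\} = \{z\in L : \textstyle\bigvee B \not\leq z\} = \phi(\textstyle\bigvee B),\]
using only that $\bigvee B\leq z$ iff every element of $B$ lies below $z$. Hence the union depends only on $\bigvee B$, and by atomisticness every $y\in L$ is of the form $\bigvee B$ for some $B\subseteq A$, so $U(P_f)=\phi(L)$.

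Finally I would check that $\phi$ is an order-isomorphism onto its image. Order preservation follows from $y_1\leq y_2\Rightarrow \phi(y_1)\subseteq \phi(y_2)$ (any $z$ with $y_1\not\leq z$ certainly has $y_2\not\leq z$), while order reflection (and hence injectivity) follows from the observation that $y_1\not\leq y_2$ gives $y_2\in\phi(y_1)\setminus\phi(y_2)$. Combined with $U(P_f)=\phi(L)$ this yields $L\cong U(P_f)$ and concludes the argument. The main conceptual hurdle I expect is recognising why a ``positive'' encoding such as $y\mapsto\{a\in A:a\leq y\}\subseteq 2^A$ fails for non-distributive atomistic lattices like $M_3$ — an atom $a\leq \bigvee B$ need not itself lie in $B$ — and hence why one must enlarge the coordinate set to all of $L$ and use the ``complement-of-filter'' encoding $\phi$ above, which automatically absorbs every such redundant atom into the union.
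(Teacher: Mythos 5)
Your proof is correct, and it takes a genuinely cleaner route than the paper to what turns out to be essentially the same construction. The paper begins with coordinates indexed by \emph{all} of $2^{[m]}$ (where $m$ is the number of atoms), prunes away those that would distinguish unions of atoms that coincide in $L$, and then needs a fairly delicate argument (involving replacing $S$ and $T$ by maximal representatives) to verify that no spurious equalities between unions were introduced by the pruning. Your ``complement-of-principal-filter'' encoding $\phi(y)=\{z\in L : y\not\leq z\}$ writes down the surviving coordinates directly: one can check that the coordinates the paper keeps are exactly those $R\subseteq[m]$ whose complement is closed downward under the atom order, i.e.\ of the form $\{a\in A: a\le z\}$, and these are in bijection with $L$. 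The payoff of your formulation is that the crucial identity $\bigcup_{a\in B}S_a=\phi(\bigvee B)$ is an immediate consequence of the least-upper-bound characterization of join, which at once gives both that unions of the $S_a$ depend only on $\bigvee B$ and that no undesired collisions occur (via order-reflection of $\phi$), eliminating the paper's delicate bookkeeping. Your side remark about why the naive positive encoding $y\mapsto\{a\in A:a\le y\}$ fails on $M_3$ correctly pinpoints the conceptual obstacle. One minor cosmetic observation: the coordinate indexed by $\hat 1$ is a dummy variable (it lies in no $S_a$), which is harmless but could be dropped.
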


\begin{proof}
Let $\left\langle L, \leqslant_L \right\rangle$ be a finite atomistic lattice, and
let $A=\{a_1,a_2,\ldots,a_m\}\subset L$ be the set of its atoms.

We will prune a certain $(0,1)$-matrix $M$ to construct the prime implicants $P_f=\{b_1,b_2,\ldots,b_m\}$ 
of the corresponding monotone Boolean function $f$.

Initially let the matrix $M$ be of size $(2^{m}-1)\times2^{m}$, 
with columns indexed by subsets of $[m]$ and rows indexed by non-empty subsets of $[m]$.
For $S,T\subseteq[m]$, $S\neq\varnothing$ we define
\begin{equation}
M_{S,T}=\begin{cases}
0 & \mbox{if }S\cap T = \varnothing, \\
1 & \mbox{if }S\cap T \neq \varnothing.
\end{cases}
\end{equation}
We will delete some columns of this matrix to obtain its final form (and thus the correct set of prime implicants).

For each $i\in[m]$ the row $\{i\}$ will represent the prime implicant $b_i$ as a bit string.
More formally, to obtain $b_i$, let $n$ be the current number of columns,
then reindex the columns by numbers from $1$ to $n$, and let $b_i=\{j\in[n]\mid M_{\{i\},j}=1\}$.
For convenience, let $u:\;[n]\to 2^{[m]}$ map the new column indices to the old ones.
(Note that $b_i$, $n$, $u$ all change when we delete a column of $M$.)

The remaining rows represent unions of prime implicants as bit strings in a similar way.
In particular, row $S$ represents $\bigcup_{i\in S}b_i$. 
Indeed: $M_{S,j}=1\Leftrightarrow\exists i\,(i\in S\cap u(j))\Leftrightarrow
\exists i\in S\,(M_{\{i\},j}=1)\Leftrightarrow\exists i\in S\,(j\in b_i)\Leftrightarrow j\in\bigcup_{i\in S}b_i$. For the rest of the proof we revert to the original indexing of columns by subsets.

Initially $U(P_f)$ generated by $M$ is isomorphic to Boolean algebra of rank $m$.
(We don't need this fact explicitly, so we omit the proof.)
We need to make it isomorphic to $L$.
Since $L$ is atomistic, every one of its elements is expressible as a union of its atoms.
However, some unions of atoms may be equal between themselves.
Suppose an equality $\bigcup_{j\in S}a_j=\bigcup_{j\in T}a_j$ holds for some distinct
$S,T\subseteq[m]$. To achieve that a corresponding equality holds for $b_j$,
we remove from $M$ all columns which make these unions distinct, that is, all
columns $W$ such that $M_{S,W}\neq M_{T,W}$.

{\centering 
\noindent
\begin{figure}[H]
\begin{minipage}[b]{0.26\textwidth}
\centering
        \resizebox{\textwidth}{!}
        {
\begin{tikzpicture}[scale=0.85,yscale=0.85]
    \node (emptyset) at (0,0) {$\varnothing$};
    \node (a) at (-2,2) {$a_1$};
    \node (b) at (0,2) {$a_2$};
    \node (c) at (2,2) {$a_3$};
    \node (bc) at (1,4) {$a_2a_3$};
    \node (abc) at (0,6) {$a_1a_2a_3$};
    \draw (emptyset) -- (a);
    \draw (emptyset) -- (b);
    \draw (emptyset) -- (c);
    \draw (a) -- (abc);
    \draw (b) -- (bc);
    \draw (c) -- (bc);
    \draw (bc) -- (abc);
\end{tikzpicture}
}
\caption{Atomistic lattice with atoms $a_1$, $a_2$, $a_3$ and equalities $a_1a_2=a_1a_3=a_1a_2a_3$\label{fig:atomistic}}
\end{minipage}
\hspace{0.03\textwidth}
    \begin{minipage}[b]{0.69\textwidth}
        \centering
        \resizebox{\textwidth}{!}
        {

        \begin{tikzpicture}[every node/.style={text depth=.5ex,text height=2ex}]
        
        \matrix (m) [matrix of math nodes, ampersand replacement=\&,
        column 1/.style={text width=8ex, align=center}, column 2/.style={text width=8ex, align=center}, column 4/.style={text width=3.5ex, align=center}, column 5/.style={text width=3.5ex, align=center}, column 6/.style={text width=3.5ex, align=center}, column 10/.style={text width=8ex, align=center}
        ]
        {
             \, \& \, \& \varnothing \& \{1\} \& |[fill=gray!30]|\{2\} \& |[fill=gray!30]|\{3\} \& \{1,2\} \& \{1,3\} \& \{2,3\} \& \{1,2,3\} \\
             \{1\} \& b_1 \& 0 \& 1 \& 0 \& 0 \& 1 \& 1 \& 0 \& 1\\
             \{2\} \& b_2 \& 0 \& 0 \& 1 \& 0 \& 1 \& 0 \& 1 \& 1\\
             \{3\} \& b_3 \& 0 \& 0 \& 0 \& 1 \& 0 \& 1 \& 1 \& 1\\
             |[fill=gray!30]|\{1,2\} \& |[fill=gray!30]|b_1b_2 \& 0 \& 1 \& |[fill=gray!30]|1 \& |[fill=gray!30]|0 \& 1 \& 1 \& 1 \& 1\\
             |[fill=gray!30]|\{1,3\} \& |[fill=gray!30]|b_1b_3 \& 0 \& 1 \& |[fill=gray!30]|0 \& |[fill=gray!30]|1 \& 1 \& 1 \& 1 \& 1\\
             \{2,3\} \& b_2b_3 \& 0 \& 0 \& 1 \& 1 \& 1 \& 1 \& 1 \& 1\\
             |[fill=gray!30]|\{1,2,3\} \& |[fill=gray!30]|b_1b_2b_3 \& 0 \& 1 \& |[fill=gray!30]|1 \& |[fill=gray!30]|1 \& 1 \& 1 \& 1 \& 1\\
        };
        \draw (m-1-1.south west) -- (m-1-10.south east);
        \draw (m-4-1.south west) -- (m-4-10.south east);
        \draw (m-1-2.north east) -- (m-8-2.south east);
        \end{tikzpicture}

        }
    \caption{Initial state of the matrix $M$. In addition to row indices the corresponding unions of prime implicants are specified. The rows $b_1b_2$, $b_1b_3$ and $b_1b_2b_3$ must be equal, therefore columns $\{2\}$ and $\{3\}$ making them different must be deleted.\label{fig:matrixMstart}}
    \end{minipage}

    \begin{minipage}[t]{0.51\textwidth}
        \vspace{0pt}
        \centering
        \resizebox{\textwidth}{!}{

        \begin{tikzpicture}[every node/.style={text depth=.5ex,text height=2ex}]
        
        \matrix (m) [matrix of math nodes, ampersand replacement=\&,
        column 1/.style={text width=6ex, align=center}, column 7/.style={text width=8ex, align=center}
        ]
        {
             \, \& \varnothing \& \{1\} \& \{1,2\} \& \{1,3\} \& \{2,3\} \& \{1,2,3\} \\
             b_1 \& 0 \& 1 \& 1 \& 1 \& 0 \& 1\\
             b_2 \& 0 \& 0 \& 1 \& 0 \& 1 \& 1\\
             b_3 \& 0 \& 0 \& 0 \& 1 \& 1 \& 1\\
             b_1b_2 \& 0 \& 1 \& 1 \& 1 \& 1 \& 1\\
             b_1b_3 \& 0 \& 1 \& 1 \& 1 \& 1 \& 1\\
             b_2b_3 \& 0 \& 0 \& 1 \& 1 \& 1 \& 1\\
             b_1b_2b_3 \& 0 \& 1 \& 1 \& 1 \& 1 \& 1\\
        };
        \draw (m-1-1.south west) -- (m-1-7.south east);
        \draw (m-4-1.south west) -- (m-4-7.south east);
        \draw (m-1-1.north east) -- (m-8-1.south east);
        \end{tikzpicture}

        }
    \caption{State of the matrix $M$ after the deletion of columns $\{2\}$ and $\{3\}$. Now the lattice generated by prime implicants $b_1,b_2,b_3$ is isomorphic to the lattice of Figure \ref{fig:atomistic}.\label{fig:matrixMend}}
    \end{minipage}
    \hspace{0.03\textwidth}
    \begin{minipage}[t]{0.45\textwidth}
        \vspace{0pt}
        \centering
        \resizebox{0.85\textwidth}{!}{
        \begin{tikzpicture}[every node/.style={text depth=.5ex,text height=2ex}]
        
        \matrix (m) [matrix of math nodes, ampersand replacement=\&,
        column 1/.style={text width=6ex, align=center}, column 5/.style={text width=6ex, align=center}
        ]
        {
             \, \&  \{1\} \& \{1,2\} \& \{1,3\} \& \{2,3\} \\
             b_1 \& 1 \& 1 \& 1 \& 0\\
             b_2 \& 0 \& 1 \& 0 \& 1\\
             b_3 \& 0 \& 0 \& 1 \& 1\\
             b_1b_2 \& 1 \& 1 \& 1 \& 1\\
             b_1b_3 \& 1 \& 1 \& 1 \& 1\\
             b_2b_3 \& 0 \& 1 \& 1 \& 1\\
             b_1b_2b_3 \& 1 \& 1 \& 1 \& 1\\
        };
        \draw (m-1-1.south west) -- (m-1-5.south east);
        \draw (m-4-1.south west) -- (m-4-5.south east);
        \draw (m-1-1.north east) -- (m-8-1.south east);
        \end{tikzpicture}
        }
    \caption{The matrix $M$ after removing columns which are unions of other columns: $\{1, 2, 3\}=\{1\}\cup\{2,3\}$ and $\varnothing$ (the empty union). Lattice generated by $\{b_1,b_2,b_3\}$ does not change. A monotone Boolean function with lattice isomorphic to that of Figure \ref{fig:atomistic} must have at least $4$ input bits, at least one per each of the remaining columns.\label{fig:matrixMopt}}
    \end{minipage}
\end{figure}
}

We perform this operation for every equality that holds between the unions of atoms $a_j$
ensuring that the corresponding equality holds also for the unions of $b_j$.
This process is illustrated in Figures~\ref{fig:atomistic}--\ref{fig:matrixMend} for an atomistic lattice with three 
atoms $a_1$, $a_2$, $a_3$ and identities $a_1\cup a_2 = a_1\cup a_3= a_1\cup a_2\cup a_3$ (we omit the operation signs in the Figures for the sake of brevity).

%\bigskip
We claim that the matrix $M$ remaining after these operations generates the lattice $U(P_f)$ which we need.
To show that, it remains to prove that we haven't introduced any undesired equality
among unions of $b_j$ by removing too many columns.

Suppose the contrary: that for some $S,T\subseteq[m]$ we have 
$\bigcup_{j\in S}a_j\neq\bigcup_{j\in T}a_j$, but $\bigcup_{j\in S}b_j=\bigcup_{j\in T}b_j$.
Here among different $S$ that give the same union $\bigcup_{j\in S}a_j$, let us use
the maximum $S$ (i.e.\ such that no superset of $S$ gives the same union), similarly
let us use the maximum $T$. Since an equality among unions of $a_j$ implies an equality
among the corresponding unions of $b_j$, $\bigcup_{j\in S}b_j=\bigcup_{j\in T}b_j$ holds
also for the new $S$ and $T$, if we replaced any of them.

At least one of the sets $S\setminus T$ and $T\setminus S$ is nonempty; WLOG suppose
that $S\setminus T\neq\varnothing$.
The equality $\bigcup_{j\in S}b_j=\bigcup_{j\in T}b_j$ means that, among others, we have
removed the column $[m]\setminus T$ from $M$ because otherwise it would make these unions distinct:
$S\cap([m]\setminus T)=S\setminus T$, so $M_{S,[m]\setminus T}=1$ while $M_{T,[m]\setminus T}=0$.
Let the equality which caused the removal of column $[m]\setminus T$ be
$\bigcup_{j\in S'}a_j=\bigcup_{j\in T'}a_j$ where 
$M_{S',[m]\setminus T}=1$ and $M_{T',[m]\setminus T}=0$.
Then $T'\subseteq T\subset S'\cup T$ and 
$\bigcup_{j\in T}a_j=\bigcup_{j\in (T\setminus T')\cup T'}a_j=\bigcup_{j\in (T\setminus T')\cup S'}a_j$.
Note that in an atomistic lattice, if $\bigcup_{j\in X}a_j=\bigcup_{j\in Y}a_j$, then 
$\bigcup_{j\in X}a_j=\bigcup_{j\in Y}a_j=\bigcup_{j\in X\cup Y}a_j$, thus the last equality
implies $\bigcup_{j\in T}a_j=\bigcup_{j\in T\cup S'}a_j$ which contradicts the maximality of $T$.

Thus no undesired equality was introduced, and $L$ is isomorphic to $U(P_f)$ by mapping
that for each $S\subseteq[m]$ maps $\bigcup_{j\in S}a_j$ to $\bigcup_{j\in S}b_j$.
\end{proof}

\textbf{[Note.} The matrix obtained at the end of the process described in this proof is not optimal in terms of size:
any column with index expressible as a union of indices of some other remaining columns (including the empty union)
can be removed, since that does not introduce any new equality (see Figure \ref{fig:matrixMopt}). 
Indeed, if columns $W_1,\ldots,W_k$ and
$W=W_1\cup\cdots\cup W_k$ remain, then any inequality of rows witnessed in the column $W$: $M_{S,W}=0\neq 1=M_{T,W}$
is witnessed also in one of the columns $W_i$: $M_{S,W}=0$ implies $M_{S,W_i}=0$ for all $i$ and $M_{T,W}=1$
implies $M_{T,W_i}=1$ for at least one $i$. After these redundant columns are removed, all other columns must remain.
Indeed, if a column $W$ is still present, then it is not a union of remaining columns, thus there exists an element
$i\in W$ not belonging to any index $V$ of a remaining column such that $V\subset W$. 
Then the inequality of rows $[m]\setminus W$ and $([m]\setminus W)\cup\{i\}$ 
is witnessed only in this column: $M_{[m]\setminus W,W}=0\neq 1=M_{([m]\setminus W)\cup\{i\},W}$
while for any $V\subset W$: $M_{[m]\setminus W,V}=M_{([m]\setminus W)\cup\{i\},V}=0$, and for any other $V$:
$M_{[m]\setminus W,V}=M_{([m]\setminus W)\cup\{i\},V}=1$. Removing this column would make these rows equal, but
any rows that remained different after the process described in the proof of the theorem must remain different,
since they correspond to different elements of the lattice. Since the columns of the matrix $M$ correspond to input
bits of the Boolean function $f$, the number of the remaining columns determines the minimum number of input bits
of a monotone Boolean function corresponding to the given atomistic lattice.\textbf{]}

Let $p:\reals^n\rightarrow \reals$ be a multilinear polynomial. We can write $p$ as a linear combination of monomials:
\begin{equation}p(x_1, x_2, \ldots, x_n)=\sum_{S\subseteq [n]}{\alpha_S\prod_{i\in S}{x_i}}.\end{equation}
We say that a real multilinear polynomial $p$ represents a Boolean function $f$ if $p(x)=f(x)$ for all $x\in\{0,1\}^n$. Every Boolean function has a unique polynomial $p$ that represents it. The poset $U(P_f)$ and its associated M\"obius function $\mu_{U(P_f)}$ have a crucial role in determining the coefficients $\alpha_S$:

\begin{theorem}\label{thm:poly}
  Let $f$ be a monotone Boolean function not identical to $1$. Then its representing polynomial is
  \begin{equation}p_f(x_1, x_2, \ldots, x_n)=\sum_{S\in U(P_f)\setminus \{\varnothing\}}{-\mu_{U(P_f)}(\varnothing, S)}{\prod_{i\in S}{x_i}}.\end{equation}
\end{theorem}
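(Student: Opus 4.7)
The plan is to exploit uniqueness of the representing polynomial: I just need to check that the polynomial on the right-hand side evaluates to $f(x)$ for every $x\in\{0,1\}^n$. For such an $x$, write $T=\{i\in[n]:x_i=1\}$, so that $\prod_{i\in S}x_i=1$ if $S\subseteq T$ and $0$ otherwise. Hence evaluating the claimed polynomial at $x$ amounts to computing
\[p_f(x)=-\sum_{\substack{S\in U(P_f)\setminus\{\varnothing\}\\ S\subseteq T}}\mu_{U(P_f)}(\varnothing,S).\]

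First I would handle the easy direction. If $f(x)=0$, then no prime implicant is contained in $T$, so no nonempty union of prime implicants is contained in $T$ either, and the sum above is empty; thus $p_f(x)=0=f(x)$.

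The substantive case is $f(x)=1$. Here I would argue that the set $\mathcal{S}_T:=\{S\in U(P_f):S\subseteq T\}$ is closed under unions (since $U(P_f)$ itself is closed under unions, and $S_1,S_2\subseteq T$ implies $S_1\cup S_2\subseteq T$); being a finite nonempty join-closed subset, it has a maximum element $T^\star\in U(P_f)$, namely the union of all prime implicants contained in $T$. Crucially, $\mathcal{S}_T$ equals the interval $[\varnothing,T^\star]$ in the poset $U(P_f)$: every $S\in U(P_f)$ with $S\subseteq T^\star$ trivially lies in $\mathcal{S}_T$, and conversely every $S\in\mathcal{S}_T$ is at most $T^\star$ by maximality. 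Applying property~2) of the Möbius function of $U(P_f)$ on the interval $[\varnothing,T^\star]$ (which is valid because $T^\star\neq\varnothing$, using $f(x)=1$), I get
\[0=\sum_{S\in[\varnothing,T^\star]}\mu_{U(P_f)}(\varnothing,S)=1+\sum_{\substack{S\in U(P_f)\setminus\{\varnothing\}\\ S\subseteq T}}\mu_{U(P_f)}(\varnothing,S),\]
so the inner sum equals $-1$, giving $p_f(x)=1=f(x)$.

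The only genuinely delicate step is the identification of $\mathcal{S}_T$ with a Möbius interval $[\varnothing,T^\star]$ in $U(P_f)$; this is exactly where Proposition~\ref{prop:atomnec} is used implicitly, since it is the lattice (join-closed) structure of $U(P_f)$ that guarantees the existence of the maximum $T^\star$ inside $U(P_f)$ and lets me apply the defining identity of $\mu_{U(P_f)}$. Uniqueness of the representing polynomial then finishes the proof.
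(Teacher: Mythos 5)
Your proposal is correct and follows essentially the same route as the paper: evaluate the claimed polynomial on each Boolean input, dispose of the $f(x)=0$ case trivially, and for $f(x)=1$ reduce to the Möbius identity on the interval $[\varnothing,T^\star]$ where $T^\star$ (the paper's $S'$) is the union of all prime implicants contained in the set of $1$-bits. You spell out the interval identification and the role of join-closedness a bit more explicitly than the paper does, but the argument is the same.
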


\begin{proof}
We will show that the polynomial $p_f$ equals $f$ on all $x\in \{0,1\}^n$. Clearly, when $f(x)=0$ then for each $S\in U(P_f)\setminus\{\varnothing\}$ there exists an $i$ such that $x_i=0$, and so $p_f(x)=0$. If $f(x)=1$, then let $S$ be such that $x_i=1 \iff i\in S$. Let $S'=\cup \{I | I\in P_f, I\subseteq S\}$. $S'$ cannot be empty since $f(x)=1$. Then
\begin{equation*}p_f(x)=\sum_{T\in U(P_f):T \subseteq S'}{\alpha_T}=\sum_{T\in U(P_f):T \subseteq S'}{-\mu_{U(P_f)}(\varnothing,T)}+\mu_{U(P_f)}(\varnothing,\varnothing)=1.\end{equation*}
\end{proof}

\section{M\"obius function of \dc}
\label{sec:moebdag}
  Let $G=\left\langle V, E, s, t\right\rangle$ be a directed multigraph without cycles, namely, an acyclic quiver, where $V$ is the set of vertices, $E$ is the set of arcs with $s:E\rightarrow V$ and $t:E\rightarrow V$ defining the source and target of an arc. For our purposes we can fix $V$, $s$, $t$ and associate $G$ with the set of its arcs $E$. Thus we will write $H\subseteq G$ when $H$ is a subquiver of $G$, etc. Let $S(G) \subseteq V$ be the set of vertices with no incoming arcs -- sources, and let $T(G)\subseteq V$ be vertices with no outgoing arcs -- sinks.
\begin{definition}
 In $\dc_G$ we are given a subquiver $H$ of some fixed quiver $G$ as a set of bits defining its arcs $x_e=\begin{cases}1\text{ if }e\in H\\0\text{ if }e\notin H \end{cases}$, and our task is to determine if there is a non-empty path from a vertex in $S(G)$ to a vertex in $T(G)$ using only the arcs in $H$.
\end{definition}

Since we allow multigraphs, i.e., graphs which can have multiple arcs $e_1, \ldots, e_m$ with the same sources and targets $s(e_1)=\cdots =s(e_m)$, $t(e_1)=\cdots = t(e_m)$, for our purposes we can assume that there is exactly one source and one sink, because we can merge all sources into one (and similarly --- all sinks) without affecting connectivity. Henceforth, we will denote the unique source as $s$ and the unique sink as $t$. 

Given an acyclic quiver $G$, let $P_G$ be the finite set of paths connecting sources and sinks.  Note that $P_G$ has to be an antichain in the poset of subquivers of $G$. The following two theorems are a special case of Theorem~3.2 from \citet{Hil03}. We give our proofs for completeness.

\begin{theorem}\label{thm:iso}
  For all acyclic quivers $G$ the poset $U(P_G)$ is isomorphic to a face lattice of a convex polytope.
\end{theorem}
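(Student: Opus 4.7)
The plan is to realize $U(P_G)$ as the face lattice of the \emph{flow polytope} of $G$. Let $B_G\subseteq\mathbb{R}^E$ be the set of $f=(f_e)_{e\in E}$ satisfying $f_e\geq 0$ for every arc $e$, flow conservation $\sum_{e:t(e)=v}f_e=\sum_{e:s(e)=v}f_e$ at every internal vertex $v$, and $\sum_{e:s(e)=s}f_e=1$. Since every coordinate is forced into $[0,1]$, the polyhedron $B_G$ is bounded, hence a convex polytope in the sense of Section~\ref{sec:posmoebpoly}.

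For each $H\subseteq E$ put $F_H=\{f\in B_G:f_e=0\text{ for all }e\notin H\}$. By Lemma~\ref{lem:faces}, each $F_H$ is either empty or a face of $B_G$, and every face arises in this way (by turning some subset of the slack inequalities $f_e\geq 0$ into equalities). I would check that for any face $F$ the ``support set'' $H_F=\{e\in E:f_e>0\text{ for some }f\in F\}$ satisfies $F=F_{H_F}$: the inclusion $F\subseteq F_{H_F}$ is immediate, while the reverse one follows from the observation that any set of tight inequality constraints representing $F$ is contained in $E\setminus H_F$.

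The heart of the argument is to characterize which subsets occur as $H_F$. I would show that a non-empty $H\subseteq E$ equals $H_F$ for some non-empty face $F$ if and only if $H\in U(P_G)$. The ``only if'' direction is flow decomposition for DAGs: any $f\in B_G$ is a convex combination of indicator vectors of $s$--$t$ paths (acyclicity precludes cycles in the decomposition), so its support is a union of $s$--$t$ paths; taking the union of supports over all $f\in F$ shows $H_F$ is itself a union of paths. The ``if'' direction is constructive: given $H=\pi_1\cup\cdots\cup\pi_k$ with each $\pi_i\in P_G$, the convex combination $f=\tfrac{1}{k}\sum_i\chi_{\pi_i}$ lies in $B_G$ and has support exactly $H$, so $F_H$ is non-empty and $H_{F_H}=H$.

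Putting these together, $H\mapsto F_H$ is a bijection between $U(P_G)\setminus\{\varnothing\}$ and the non-empty faces of $B_G$; extending by $\varnothing\mapsto\varnothing$ covers the empty face (note $\mathbf{0}\notin B_G$, so $F_\varnothing=\varnothing$), which is the minimum of the face lattice. Monotonicity in both directions is routine: $H_1\subseteq H_2$ immediately gives $F_{H_1}\subseteq F_{H_2}$, and conversely $F_1\subseteq F_2$ implies $H_{F_1}\subseteq H_{F_2}$, so $H_{F_{H_1}}\subseteq H_{F_{H_2}}$ recovers $H_1\subseteq H_2$. The step I expect to require the most care is the identification $F=F_{H_F}$ together with the correct invocation of flow decomposition on $F$; once this combinatorial-geometric dictionary is set up, the lattice isomorphism falls out.
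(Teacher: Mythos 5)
Your proof is correct and takes essentially the same route as the paper: both identify $U(P_G)$ with the face lattice of the $s$--$t$ flow polytope, use Lemma~\ref{lem:faces} to read off faces as the sets obtained by forcing $f_e=0$ on a subset of arcs, and argue that the support of any such face is exactly a union of $s$--$t$ paths. Your write-up is somewhat more explicit than the paper's (you name the support map $H_F$, invoke flow decomposition on DAGs directly, and exhibit the averaged path flow to show $F_H\neq\varnothing$), but the underlying ideas match.
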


\begin{proof}
  For an acyclic quiver $G$ and its subquiver $H\subseteq G$, let $F(H)$ be the set of unit flows from $s$ to $t$, i.e., $F(H)$ consists of all vectors in $\reals^{|G|}$
  \[F(H) = \{(f_e)_{e\in G}\}_{f\in \reals^{|G|}},\]
  such that the following additional constraints are satisfied:
  \begin{enumerate}[(a)]
    \item Flow is non-negative: $\forall e\in G: f_e\geq 0$;
    \item \label{li:total}The total flow is 1: $\sum_{\substack{e:e\in G\\s(e)=s}}{f_e}=1$;
    \item \label{li:con}Flow conservation: $\forall v\notin \{s,t\}: \sum_{\substack{e:e\in G\\t(e)=v}}{f_e}=\sum_{\substack{e:e\in G\\s(e)=v}}{f_e}$.
    \item \label{li:sub}Flow is restricted to the subquiver: $\forall e\notin H: f_e=0$;
  \end{enumerate}
  Next, we show that $F:U(P_G)\rightarrow 2^{\reals^{|G|}}$ is indeed the bijection we sought. First, $F(H)$ shares almost all constraints with $F(G)$ except equalities (\ref{li:sub}). By Lemma~\ref{lem:faces} the faces of $F(G)$ correspond to the system of $F(G)$ where some inequalities are replaced by equalities, i.e., $f_e=0$ for some $H'\subseteq G$. Thus, $F(H)$ is a face of $F(G)$ for any $H\in U(P_G)$ since the inequalities $f_e\geq 0$ are replaced by equalities $f_e=0$ for $e \notin H$. On the other hand, for a subquiver $H'$ there exists a $H\in U(P_G)$ such that $H=\bigcup{\{p\in P_G|p \subseteq H'\}}$. But subquivers $H$ and $H'$ correspond to the same face; since all $e\in H'\setminus H$ have no path in $H'$ containing them, the flow on these arcs must be zero: $f_e=0$ for $e\in H'\setminus H$.  By construction the inclusion property is obviously preserved, since the flow is more constrained on a subquiver.
\end{proof}

Next we give a simple formula for computing the dimension of a face of the flow polytope. Let $D(H)=|H|-|\{s(e)|e\in H\} \cup \{t(e)|e\in H\} \setminus \{s,t\}|-1$.

\begin{theorem}\label{thm:dim}
  If $H\in U(P_G)$, then the dimension of the corresponding face is
  \begin{equation}\dim{F(H)}=D(H).\end{equation}
\end{theorem}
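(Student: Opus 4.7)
The plan is to compute $\dim F(H)$ by determining the affine hull of $F(H)$ and then verifying that $F(H)$ is full-dimensional within it. Observe first that for $H\in U(P_G)\setminus\{\emptyset\}$ we have $s,t\in V(H):=\{s(e)\mid e\in H\}\cup\{t(e)\mid e\in H\}$, so $D(H)=|H|-|V(H)|+1$; this is the number I want to match.

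First I would identify the equality constraints active on all of $F(H)$: namely $f_e=0$ for $e\notin H$, flow conservation at each $v\in V\setminus\{s,t\}$, and total outflow $=1$ from $s$. The equalities $f_e=0$ for $e\notin H$ are independent and cut the ambient $\mathbb{R}^{|G|}$ down to $\mathbb{R}^{|H|}$. Conservation at vertices $v\notin V(H)$ is automatic once $f_e=0$ off $H$, so only conservation at the $|V(H)|-2$ vertices in $V(H)\setminus\{s,t\}$ together with the single equation "outflow from $s$ equals $1$" remain as non-trivial constraints on the coordinates indexed by $H$.

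Next I would bound the rank of this remaining system. Since $s$ is a source of the DAG it has no incoming arcs, so the "outflow from $s$" equation is precisely the row of the signed incidence matrix $B$ of $H$ corresponding to $s$ (applied to $f$), with right-hand side $1$. Thus the constraint matrix is $B$ restricted to rows $V(H)\setminus\{t\}$. The only linear dependency among the rows of $B$ on a connected (undirected) graph is that they sum to zero, so deleting the row for $t$ leaves $|V(H)|-1$ linearly independent rows provided $H$ is connected as an undirected graph. Connectedness holds because every $e\in H$ lies on some $s$--$t$ path of $H$ by definition of $U(P_G)$, so every vertex of $V(H)$ is joined to $s$ within $H$. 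Hence the affine hull of $F(H)$ has dimension $|H|-(|V(H)|-1)=D(H)$.

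Finally I would check that $F(H)$ meets the relative interior of its affine hull, so that none of the inequalities $f_e\geq 0$ for $e\in H$ reduces the dimension. Write $H=\bigcup_{p\in \mathcal{P}}p$ for a set $\mathcal{P}\subseteq P_G$ of paths, and take the flow $f=\frac{1}{|\mathcal{P}|}\sum_{p\in\mathcal{P}}\chi_p$, where $\chi_p$ is the unit flow along $p$. Each arc of $H$ belongs to at least one $p\in\mathcal{P}$, so $f_e>0$ for every $e\in H$; hence $F(H)$ is full-dimensional in its affine hull, giving $\dim F(H)=D(H)$. The most delicate step is the rank count, where the structural hypothesis $H\in U(P_G)$ (not an arbitrary subgraph) is exactly what is needed to guarantee the connectedness that makes the incidence matrix have rank $|V(H)|-1$.
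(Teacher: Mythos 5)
Your proof is correct and takes a genuinely different route from the paper's. The paper establishes the dimension via an ``ear decomposition'' induction on the lattice $U(P_G)$: it first proves that every covering relation $H\subsetdir H'$ in $U(P_G)$ corresponds to adding an ear (a path whose internal vertices have no other incident arcs in $H'$), so $D(H')=D(H)+1$, and then shows by induction on the ear decomposition that $\{\mathds{1}(p_0),\dots,\mathds{1}(p_{D(H)})\}$ is an affine basis for the affine hull of $F(H)$ — bounding $\dim F(H)$ from above by affine spanning and from below by linear independence. You instead compute the affine hull directly: you reduce to coordinates indexed by $H$, identify the remaining equality constraints as the vertex--arc incidence matrix of $H$ with the row for $t$ deleted, and invoke the standard fact that the incidence matrix of a connected graph on $|V(H)|$ vertices has rank $|V(H)|-1$ with left null space spanned by the all-ones vector (so deleting one row loses nothing). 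The crucial structural input — that $H$ is undirectedly connected because every arc of $H$ lies on an $s$--$t$ path, by definition of $U(P_G)$ — replaces the paper's use of the same hypothesis inside the ear-decomposition argument. Your averaging flow $\frac{1}{|\mathcal{P}|}\sum_{p\in\mathcal{P}}\chi_p$ cleanly rules out implicit equalities among the nonnegativity constraints, which the paper handles implicitly by exhibiting the unit flows $\mathds{1}(p_i)$ as points of $F(H)$. Your approach is the more classical flow-polytope argument and is arguably shorter and more self-contained; the paper's approach has the side benefit of exposing the ear structure of covering relations in $U(P_G)$, which is of independent interest. One small omission: you silently restrict to $H\neq\emptyset$; the case $H=\emptyset$ (where $F(\emptyset)=\emptyset$, $\dim=-1=D(\emptyset)$) should be noted for completeness, as the paper does.
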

\begin{proof}
  First, the lemma is clearly true for empty quiver and a single path from $P_G$.

  Denote by $A\subsetdir B$, if $A,B \in U(P_G)$, $A\subset B$ and $\neg \exists C\in U(P_G): A\subset C \subset B$. Let us show that for all $H, H'\in U(P_G)$ such that $H \subsetdir H'$: $\Delta:=H'\setminus H$ is an ear of $H'$, i.e., a path $(v_0, v_1, \ldots, v_k)$ whose internal vertices $v_1, v_2,\ldots, v_{k-1}$ have no other adjacent arcs in $H'$. Hence $D(H')=D(H)+1$. Let us prove by contradiction assuming that $\Delta$ is not an ear. Clearly, there exists some subset $\delta \subseteq \Delta$ that is an ear of $\delta \cup H$; one can start a path with any arc of $\Delta$ and extend the path by traveling backwards and forwards along arcs in $\Delta$ until a vertex with an adjacent arc in $H$ is encountered. Consider the initial vertex $v_{init}$ of $\delta$. There must be a path $\alpha \subseteq H$ [potentially empty] from $s$ to $v_{init}$. If $v_{init}$ has no incoming arcs in $H$ then $v_{init}=s$. If $v_{init}$ has an incoming edge in $H$ then $\alpha$ is a path from $s$ to $v_{init}$. Symmetrically reasoning we obtain a path $\omega$ from the final vertex to $t$. Clearly, $\alpha \cup \delta \cup \omega \in P_G$ and so $\delta \cup H \in U(P_G)$. Thus $H \subset H\cup \delta \subset H\cup\Delta =H'$ --- a contradiction.

\begin{figure}[H]
    \centering
%    \resizebox{\textwidth}{!}{
        \begin{tikzpicture}[ns/.style={inner sep=0,outer sep=0}]
        \node[ns] (N1) at (0,0) {};
        \node[ns] (N2) at (0,1) {};
        \node[ns] (N3) at (1,0) {};
        \node[ns] (N4) at (1,2) {};
        \node[ns] (N5) at (2,1) {};
        \node[ns] (N6) at (2,2) {};
        \draw[-stealth,red,thick] (N1) -- (N2);
        \draw[-stealth,red,thick] (N2) -- (N4);
        \draw[-stealth,red,thick] (N4) -- (N6);
        \draw[-stealth,green,thick] (N1) -- (N3);
        \draw[-stealth,green,thick] (N3) -- (N4);
        \draw[-stealth,blue,thick] (N3) -- (N5);
        \draw[-stealth,blue,thick] (N5) -- (N6);
        \draw[-stealth,rounded corners=4pt]  ($(N1.center)+(-0.1,0)$) -- ($(N2.center)+(-0.1,0.1)$) -- node[midway,above left] {$p_0$} ($(N4.center)+(-0.1,0.1)$) -- ($(N6.center)+(0,0.1)$);
        \draw[-stealth,rounded corners=4pt]  ($(N1.center)+(0.1,0.1)$) -- ($(N3.center)+(-0.1,0.1)$) -- node[midway,left] {$p_1$}($(N4.center)+(-0.1,-0.1)$) -- ($(N6.center)+(-0.1,-0.1)$);
        \draw[-stealth,rounded corners=4pt]  ($(N1.center)+(0,-0.1)$) -- ($(N3.center)+(0.1,-0.1)$) -- node[midway,below right] {$p_2$}($(N5.center)+(0.1,-0.1)$) -- ($(N6.center)+(0.1,0)$);
        \matrix[anchor=north west,fill=white,draw,inner sep=0.3em,nodes={anchor=center, align=center}] at (4,2){
    \draw[-,red,thick] (0,0) -- (1,0);&\node[text width=1cm]{Ear 0}; \\
    \draw[-,green,thick] (0,0) -- (1,0);&\node[text width=1cm]{Ear 1}; \\
    \draw[-,blue,thick] (0,0) -- (1,0);&\node[text width=1cm]{Ear 2}; \\};
    \end{tikzpicture}
%    }
    \caption{Ear decomposition}\label{f:ears}
\end{figure}

  Each $H\in U(P_G)$ has an ``ear decomposition'' described as $\varnothing =H_{-1} \subsetdir H_0 \subsetdir H_1 \subsetdir \cdots \subsetdir H_{D(H)-1} \subsetdir H_{D(H)}=H$. In particular, let the path $\alpha \cup \delta \cup \omega \in P_G$ added from $H_{k-1}$ to $H_k$ be denoted by $p_k$. Then $H_k=H_{k-1}\cup p_k$.

  Let $\mathds{1}(p)=(f_e)_{e\in G}$ denote the unit flow along path $p$: $f_e=\begin{cases}1\text{ if }e\in p \\ 0 \text{ if } e \notin p\end{cases}$. Any flow satisfying all but the non-negativity constraints (i.e., satisfying (\ref{li:total}),(\ref{li:con}),(\ref{li:sub})) for $H$ is an affine combination of $\{\mathds{1}(p_i)|i\in \{0, 1, \ldots, D(H)\}\}$. We can establish this by induction on $D(H)$. This is obviously the case for $D(H)=-1$ and $D(H)=0$. Assuming that it is true for $D(H)=k-1$, $k>0$, let $f$ be the flow vector. Note that the flow $f_e$ for arcs $e$ in the ear $e \in H_{D(H)}\setminus H_{D(H)-1}$ must be equal. Consider the flow
  \begin{equation}f'=f-f_e(\mathds{1}(p_k)-\mathds{1}(p_{k-1})).\end{equation}
  $f'$ satisfies constraints (\ref{li:total}),(\ref{li:con}),(\ref{li:sub}) for $H_{D(H)-1}$ and so by inductive assumption $f'$ is in the affine span of $\{\mathds{1}(p_i)|i\in \{0, 1, \ldots, D(H)-1\}\}$. We conclude that $\dim{F(H)}\leq |\{p_0, p_1, \ldots, p_{D(H)}\}|-1 = D(H)$.

  $\{\mathds{1}(p_i)-\mathds{1}(p_0)|i\in \{1, \ldots, D(H)\}\}$ are linearly independent because each $\mathds{1}(p_i)-\mathds{1}(p_0)$ is outside the linear span of $\{\mathds{1}(p_i)-\mathds{1}(p_0)|i\in \{1, \ldots, i-1\}\}$. Obviously, $\{\mathds{1}(p_i)|i\in \{0, 1, \ldots, D(H)\}\}$ belong to $F(H)$. Therefore $\dim{F(H)}\geq D(H)$.
  \end{proof}
%
%  Another proof of Theorem~\ref{thm:dim} can be found in \cite{BV08}.

\begin{corollary}
  The unique multilinear polynomial representing $\dc_G$ is:
  \begin{equation}p_G(x)=\sum_{H \in U(P_G)\setminus \{\varnothing\}}{(-1)^{D(H)}\prod_{i\in H}{x_i}}.\end{equation}
\end{corollary}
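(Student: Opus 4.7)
The proof is essentially an assembly of the machinery already built, so my plan is to chain the four results together and make sure the signs match at the end.

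First, I would observe that $\dc_G$ is monotone (adding arcs cannot destroy a source–sink path) and that its prime implicants are exactly the minimal subsets of arcs on which the function equals $1$, which are precisely the simple paths from $s$ to $t$. Hence $P_{\dc_G}=P_G$ and $U(P_{\dc_G})=U(P_G)$. Assuming $G$ has at least one $s\to t$ path (otherwise $\dc_G\equiv 0$ and the stated sum is empty, matching the zero polynomial), $\dc_G$ is not identically $1$, so Theorem~\ref{thm:poly} applies and gives
\[p_G(x)=\sum_{H\in U(P_G)\setminus\{\emptyset\}} -\mu_{U(P_G)}(\emptyset,H)\prod_{i\in H} x_i.\]

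Next I would translate the M\"obius function via the isomorphism of Theorem~\ref{thm:iso}. That theorem provides an order isomorphism $H\mapsto F(H)$ between $U(P_G)$ and the face lattice of the flow polytope $F(G)$, with the empty subgraph corresponding to the empty face. Since the M\"obius function depends only on the poset structure, $\mu_{U(P_G)}(\emptyset,H)=\mu_{F(G)}(F(\emptyset),F(H))$. Then Corollary~\ref{cor:euler} evaluates the right‑hand side as $(-1)^{\dim F(H)-\dim F(\emptyset)}$, and Theorem~\ref{thm:dim} together with the convention $\dim\emptyset=-1$ (which matches $D(\emptyset)=|\emptyset|-|\emptyset|-1=-1$) gives $\dim F(H)=D(H)$. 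Combining, $\mu_{U(P_G)}(\emptyset,H)=(-1)^{D(H)+1}$, and the leading minus sign in Theorem~\ref{thm:poly} flips this to $(-1)^{D(H)}$, yielding exactly the claimed formula.

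The only real point to be careful about is the sign bookkeeping around the empty face: I would explicitly verify that $F(\emptyset)$ is indeed a face of $F(G)$ (the empty intersection of halfspaces with all inequalities tightened to equalities, consistent with Lemma~\ref{lem:faces}) and that it is the bottom of the face lattice, so that Corollary~\ref{cor:euler} is applied to a legitimate pair $F(\emptyset)\subseteq F(H)$. Once this is in place, the proof is essentially a one‑line chain of substitutions, and I do not anticipate any substantive obstacle beyond this sign/convention check.
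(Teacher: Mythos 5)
Your proposal is correct and follows the same route as the paper: apply Theorem~\ref{thm:poly}, transport the M\"obius function through the isomorphism of Theorem~\ref{thm:iso}, evaluate it via Corollary~\ref{cor:euler}, and substitute $\dim F(H)=D(H)$ from Theorem~\ref{thm:dim}, with the $\dim\emptyset=-1$ convention absorbing the leading minus sign. The only difference is that you spell out the edge cases (nonconstancy needed for Theorem~\ref{thm:poly}, $F(\emptyset)$ being the bottom face) a bit more explicitly than the paper does, which is fine.
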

\begin{proof}
    By Theorem~\ref{thm:poly} we have:
    \begin{equation}p_G(x)=\sum_{H\in U(P_G)\setminus \{\varnothing\}}{-\mu_{U(P_G)}(\varnothing, H)}{\prod_{i\in H}{x_i}}.\end{equation}
    By Theorem~\ref{thm:iso} the poset $U(P_G)$ is isomorphic to the face lattice of a polytope, and by the Corollary~\ref{cor:euler} of Euler's relation:
    \begin{equation}p_G(x)=\sum_{H\in U(P_G)\setminus \{\varnothing\}}{-(-1)^{\dim{F(H)-\dim{\varnothing}}}}{\prod_{i\in H}{x_i}}=
    \sum_{H\in U(P_G)\setminus \{\varnothing\}}{(-1)^{\dim{F(H)}}}{\prod_{i\in H}{x_i}}.\end{equation}
    Finally, by Theorem~\ref{thm:dim} we conclude that
    \begin{equation}p_G(x)=\sum_{H\in U(P_G)\setminus \{\varnothing\}}{(-1)^{D(H)}}{\prod_{i\in H}{x_i}}.\end{equation}
\end{proof}

\begin{corollary}
  The degree of $p_G(x)$ is maximal: $\deg p_G(x)=|G|$.
\end{corollary}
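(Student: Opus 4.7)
The plan is to show two inequalities. Upper bound is immediate: $p_G(x)$ is a multilinear polynomial in $|G|$ variables, so $\deg p_G(x) \leq |G|$.

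For the lower bound, I would argue that $G$ itself is a union of $s$-$t$ paths, hence $G\in U(P_G)$. Without loss of generality we may assume that every arc of $G$ lies on at least one path from $s$ to $t$: any arc not on such a path does not affect the value of $\dc_G$ and can be dropped from $G$ without changing the set of relevant variables or the representing polynomial. Under this assumption, for each arc $e\in G$ pick a path $p_e\in P_G$ containing $e$; then $G=\bigcup_{e\in G} p_e\in U(P_G)$.

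By the preceding corollary, the monomial $\prod_{e\in G} x_e$ appears in $p_G(x)$ with coefficient $(-1)^{D(G)}$, which is non-zero. Thus $\deg p_G(x)\geq |G|$, and combined with the upper bound we obtain $\deg p_G(x)=|G|$.

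The only subtle point is the reduction to the case where every arc lies on some $s$-$t$ path; this is essentially a normalization of $G$ rather than a genuine obstacle. Once granted, the construction $G=\bigcup_e p_e$ and the fact that every coefficient in the formula from the previous corollary is $\pm 1$ (in particular non-zero) makes the lower bound immediate.
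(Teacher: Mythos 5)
Your approach is the same as the paper's: note that $G$ itself is a union of $s$--$t$ paths, hence $G\in U(P_G)$, and by the preceding corollary the top monomial $\prod_{e\in G}x_e$ appears with coefficient $(-1)^{D(G)}\neq 0$. However, the ``WLOG'' step you insert is not a valid reduction for this statement. If $G$ really did contain an arc $e$ lying on no $s$--$t$ path, then $p_G$ would be independent of $x_e$, so $\deg p_G\leq|G|-1$ and the claim $\deg p_G=|G|$ would simply be false for that $G$; dropping $e$ changes $|G|$, so you would only have proved the claim for the smaller graph, not the original one. The reason no such arc exists is the normalization already made earlier in the section: after merging all sources into the single vertex $s$ and all sinks into $t$, every vertex other than $s$ has an incoming arc and every vertex other than $t$ has an outgoing arc, so in the acyclic graph $G$ any arc $u\to v$ can be extended backward from $u$ until reaching $s$ and forward from $v$ until reaching $t$, giving an $s$--$t$ path through it (and acyclicity ensures the concatenation is a simple path). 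With that observation replacing the spurious WLOG, your construction $G=\bigcup_{e\in G}p_e\in U(P_G)$ and the non-vanishing coefficient $(-1)^{D(G)}$ give exactly the paper's one-line proof.
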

\begin{proof}
Since the whole quiver $G$ is also a union of paths: $G\in U(P_G)$, the coefficient at its monomial is $(-1)^{D(G)}$, i.e.~not zero.
\end{proof}

\section{Size of \texorpdfstring{$U(P_G)$}{U(P\_G)} for 2D grids}
\label{sec:sizegrid}

Let $G_n$ be a directed grid: the vertices of this quiver are labeled by $\{0,1,\ldots, n\}^2$ and there is an arc from vertex $(i_1, j_1)$ to $(i_2, j_2)$ iff $(i_1=i_2 \wedge j_2=j_1+1)$ or $(j_1=j_2 \wedge i_2=i_1+1)$.
\begin{theorem}\label{thm:gridlb}
$\left|U(P_{G_n})\right| \in 2^{\Omega(n^2)}.$
\end{theorem}
\begin{proof}
Consider the subgraph $H$ of grid consisting of all the horizontal edges and vertical edges with $i_1\in \{0, n\}$ (see Figure \ref{f:horizontal}).
\begin{figure}[H]
    \centering
        \resizebox{0.5\textwidth}{!}{
            \begin{tikzpicture}
                \tikzmath{
                    \n = 10;
                    \nmo = \n-1;
                }
                \foreach \x in {0,1,...,\n}
                    \foreach \y in {0,1,...,\nmo}
                    {
                        \draw[-stealth] (\y,\x) -- (\y+1,\x);
                    }
                \foreach \y in {0,1,...,\nmo}
                {
                    \draw[-stealth] (0,\y) -- (0,\y+1);
                    \draw[-stealth] (\n,\y) -- (\n,\y+1);
                }
                \foreach \x in {1,...,\nmo}
                    \foreach \y in {0,1,...,\nmo}
                    {
                        \draw[-stealth, dotted] (\x,\y) -- (\x,\y+1);
                    }
            \end{tikzpicture}
        }
        \caption{The subgraph $H$ of grid $G_n$ for $n=10$}\label{f:horizontal}
 \end{figure} 
Clearly, all the subgraphs containing $H$ are unions of paths since $H\in U(P_{G_n})$ and for any edge $e \notin H$ there is a path consisting of $e$ and only edges in $H$. The number of such subgraphs is $2^{(n+1)(n-1)}$.
\end{proof}

Unfortunately, this shows that the approximating polynomial produced by the construction scheme described in \citet[Lemma~31]{Beniamini2020} either has degree $\Omega(n^2)$ or the estimate of its degree is not optimal.

Let $D^{\operatorname{AND}}(f)$ denote the decision tree complexity of computing a function $f$ using a decision tree whose nodes are allowed to compute an $\operatorname{AND}$ of an arbitrary subset of input bits. Then, in light of \cite[Lemma~3.15]{BeniaminiNissan2020} stating that $D^{\operatorname{AND}}(f)\geq \log_3{|mon(f)|}$ where $mon(f)$ is the set of monomials with non-zero coefficients, we conclude that
\begin{corollary} $D^{\operatorname{AND}}(\dc_{G_n}) = \Omega(n^2)$.
\end{corollary}

\section{Conclusion}
\label{sec:conc}
Like \citet{BeniaminiNissan2020}, we studied also the dual $f^*$ of the function: $f^*(x_1,x_2,\ldots, x_n)=\neg f(\neg x_1, \neg x_2, \ldots, \neg x_n)$. Let $\dc^*(x_1, x_2,\ldots, x_n)=\sum_{S\subseteq [n]}{\alpha^*_S\prod_{i\in S}{x_i}}$. Prime implicants of $\dc^*$ are the minimal cuts. Even though we did not manage to prove it for arbitrary quivers, we found that for the quivers we analyzed the following properties are true:
\begin{itemize}
    \item $\alpha^*_S\in \{-1,0,1\}$;
    \item The sets $S$ corresponding to the non-zero $\alpha^*_S$ together with the empty set constitute an Eulerian lattice with the usual subset inclusion relation. However, unlike for $\dc$ the elements of this lattice are not all unions of minimal cuts, but a subset of them. In particular, a slight generalization of Lemma~4.6 from \citet{BeniaminiNissan2020} to any function with the union of prime implicants $U(P_f)$ being an Eulerian lattice holds. However, for $\dc$ it is not a sufficient criterion to determine whether $\alpha^*_S=0$.
\end{itemize}

The following topics could be of interest for future research:
\begin{itemize}
\item
What other useful classes of monotone Boolean functions, besides the Bipartite Perfect Matching \citep{BeniaminiNissan2020} and \dc, have simple M\"obius functions?
\item
The lattices of Bipartite Perfect Matching and $\dc$ are Eulerian implying simplicity coefficients of the representing polynomial. Is there a simple (good, useful) characterization of the monotone Boolean functions whose lattices are Eulerian?
\item
Can the representing polynomial of some $\dc$ problem be used to improve its complexity estimations? For instance, the current quantum query complexity estimations for the $n\times n$ two-dimensional grid still have gap between $\Omega(n^{1.5})$ and $O(n^2)$ \citep{ABIKKPSSV2020}. Since the quantum query complexity is lower bounded by the minimum degree of an approximating polynomial (divided by 2), one of the questions is: can the representing polynomial be used to obtain lower bounds exceeding $\Omega(n^{1.5})$ for the degree of an approximating polynomial?
\end{itemize}

\section*{Acknowledgements}
The authors wish to thank Andris Ambainis for useful suggestions on how to improve the paper. This research was supported by QuantERA ERA-NET Cofund in Quantum Technologies implemented within the European Union's Horizon 2020 Programme (QuantAlgo project), ERDF project 1.1.1.5/18/A/020 ``Quantum algorithms: from complexity theory to experiment'' and the Latvian Quantum Initiative under European Union Recovery and Resilience Facility project no. 2.3.1.1.i.0/1/22/I/CFLA/001.
\nocite{*}
\bibliographystyle{abbrvnat}
% use the following instead if you encounter problems 
%\bibliographystyle{alpha}
\bibliography{global}
\label{sec:biblio}

\end{document}